\documentclass[aps,pra,reprint,superscriptaddress,longbibliography]{revtex4-2}

\usepackage{amsmath,amssymb,amsthm,mathtools}
\usepackage{bm}
\usepackage{braket}
\usepackage{microtype}
\usepackage[hidelinks]{hyperref}
\usepackage{centernot}
\usepackage{empheq}
\usepackage{xcolor}

\newcommand{\id}{\mathbb I}
\newcommand{\HSS}{\mathcal H}
\newcommand{\HS}{\mathrm{HS}}
\newcommand{\Tr}{\operatorname{Tr}}

\newtheorem{theorem}{Theorem}[section]
\newtheorem{proposition}[theorem]{Proposition}
\newtheorem{corollary}[theorem]{Corollary}
\newtheorem{lemma}[theorem]{Lemma}
\theoremstyle{remark}
\newtheorem{remark}[theorem]{Remark}
\newtheorem{definition}[theorem]{Definition}

\newcommand*\widefbox[1]{%
	\fbox{\hspace{0.5em}#1\hspace{0.5em}}%
}

\begin{document}
	
	\title{
		Contractivity of the Hilbert--Schmidt Speed and
		Unitality--Divisibility-Based Witnesses
		in Finite-Dimensional Quantum Dynamics
	}

	\author{Hossein Rangani Jahromi}
	\affiliation{Department of Physics, Jahrom University, Jahrom, Iran}
	
\begin{abstract}
	We investigate the Hilbert--Schmidt speed as a witness of non-Markovianity in finite-dimensional quantum dynamics. For a differentiable one-parameter family of quantum states, the Hilbert--Schmidt speed is defined, up to a conventional factor, as the Hilbert--Schmidt norm of the corresponding Hermitian traceless tangent operator. We show that, for unital dynamics in arbitrary finite dimension, P-divisibility implies the monotonic decrease of the Hilbert--Schmidt speed. Hence, any increase in this quantity signals the breakdown of P-divisibility, and therefore also excludes CP-divisibility. For qubit systems, we establish a stronger result: every P-divisible evolution decreases the Hilbert--Schmidt speed, independently of unitality. Thus, in dimension two, HSS growth is a valid witness of non-Markovian behaviour for arbitrary positive divisible dynamics. This conclusion is dimension dependent. In dimensions $d\geq3$, non-unitality can generate HSS growth even under CP-divisible dynamics. We demonstrate this by constructing an explicit non-unital CP-divisible qutrit semigroup for which the Hilbert--Schmidt speed strictly increases. Finally, for unital GKSL dynamics, we derive a generator-level dissipation identity explaining the monotonic decay of the Hilbert--Schmidt speed. These results specify the regimes in which HSS growth can be interpreted as evidence of the failure of divisibility and clarify its limitation for higher-dimensional non-unital evolutions.
\end{abstract}

	\maketitle
	
	\section{Introduction}
	\label{sec:introduction}
	
	Contractive quantities play a central role in the analysis of open
	quantum-system dynamics \cite{BreuerPetruccione2002,NielsenChuang2010}.  If a distinguishability measure decreases under every admissible quantum channel, its temporary increase along a
	dynamical evolution can reveal a failure of divisibility and may
	therefore be used as a witness of memory effects and non-Markovianity \cite{RHP,BreuerReview,RivasReview}.
	
	The Hilbert--Schmidt norm and distance are particularly simple to
	evaluate, but, unlike the trace norm and distance, they are not
	contractive in general for open quantum-system dynamics
	\cite{WangSchirmer2009}. This fact requires care when the
	Hilbert--Schmidt speed is used to diagnose non-Markovian behavior. One
	must distinguish at least three logically different statements:
	\begin{enumerate}
		\item contraction of a single dynamical map between time zero and
		time $t$;
		\item contraction of every intermediate propagator between
		arbitrary times $s$ and $t$;
		\item contraction on all Hermitian operators, as opposed to
		contraction only on traceless Hermitian state tangents.
	\end{enumerate}
	
	Hilbert--Schmidt speed (HSS) has recently emerged as a practical
	witness of non-Markovianity.  In particular, the HSS-based witness was
	introduced in Ref.~\cite{RanganiJahromi2020}, where it was shown,
	through several paradigmatic open-system models, to detect memory
	effects consistently with the trace-distance-based BLP criterion
	\cite{BreuerLainePiilo2009}, while
	retaining the computational advantage that no diagonalization of the
	evolved density matrix is required.  This perspective was subsequently
	extended to high-dimensional and multipartite finite-dimensional
	systems in Ref.~\cite{Mahdavipour2022}, where further evidence was
	found for the effectiveness of HSS in identifying memory effects.
	Beyond conventional open-system dynamics, HSS has also been employed as
	an efficiently computable probe of exceptional points in non-Hermitian
	$PT$- and anti-$PT$-symmetric systems \cite{JahromiLoFranco2022}.
	\par
	These developments naturally motivate a general analytical
	characterization of the dynamical conditions under which HSS must be
	monotone.
	
	The purpose of this work is to clarify these distinctions and establish
	the exact role of unitality and dimension.  Our main results are:
	\begin{enumerate}
		\item A positive trace-preserving map is Hilbert--Schmidt
		contractive on all Hermitian operators if and only if it is unital.
		\item Unital dynamical maps imply contraction relative to the
		initial time.
		\item Unitality together with P-divisibility implies monotonicity
		between every pair of times.
		\item For qubits, P-divisibility alone implies monotonicity of the
		Hilbert--Schmidt speed.
		\item In dimensions $d\geq3$, non-unital CP-divisible dynamics can
		increase the Hilbert--Schmidt speed.
	\end{enumerate}
	
	Throughout the paper, ``P-divisible'' and ``CP-divisible''  refer to
	divisibility by positive trace-preserving and completely positive
	trace-preserving propagators \cite{BreuerPetruccione2002}, respectively.  We avoid using	``Markovian'' without qualification because several inequivalent
	notions of quantum Markovianity occur in the literature.
	
	The paper is organized as follows. In Sec.~\ref{sec:hss-tangent}, we
	define the Hilbert--Schmidt speed and relate its contractivity to
	Hilbert--Schmidt-norm contractivity of physical tangent operators. In
	Sec.~\ref{sec:unitality-contractivity}, we prove the equivalence between
	unitality and Hilbert--Schmidt contractivity on the full Hermitian
	operator space. In Sec.~\ref{sec:dynamical-maps}, we apply this result
	to dynamical maps and distinguish endpoint contraction from full
	monotonicity under P-divisibility. In Sec.~\ref{sec:qubit}, we prove
	the special qubit result showing that unitality is not required for
	Hilbert--Schmidt-speed monotonicity under P-divisibility. In
	Sec.~\ref{sec:gksl-generator}, we derive the generator-level
	dissipation identity for unital GKSL dynamics. In
	Sec.~\ref{sec:qutrit}, we present a non-unital CP-divisible qutrit
	semigroup for which the Hilbert--Schmidt speed increases. In
	Sec.~\ref{sec:consequences}, we discuss the consequences for using HSS
	growth as a divisibility witness, and in
	Sec.~\ref{sec:logical-hierarchy} we summarize the logical hierarchy of
	the results. We close with the conclusions in
	Sec.~\ref{sec:conclusions}.
	
	\section{Hilbert--Schmidt speed and physical tangent operators}
	\label{sec:hss-tangent}
	
	Let $\mathcal H_d$ be a $d$-dimensional Hilbert space and let
	$\mathcal B(\mathcal H_d)$ denote its operator algebra.  The
	Hilbert--Schmidt inner product and norm are
	\begin{equation}
		\langle A,B\rangle_{\HS}
		=
		\Tr(A^\dagger B),
		\qquad
		\|A\|_{\HS}
		=
		\sqrt{\Tr(A^\dagger A)}.
	\end{equation}
	
	Consider a differentiable family of density operators
	$\rho_\varphi$, parametrized by a real parameter $\varphi$.  Its tangent
	operator is
	\begin{equation}
		X_\varphi
		:=
		\partial_\varphi\rho_\varphi.
	\end{equation}
	Because $\rho_\varphi$ is Hermitian and has unit trace,
	\begin{equation}
		X_\varphi^\dagger=X_\varphi,
		\qquad
		\Tr X_\varphi=0.
	\end{equation}
	
	We use the conventional definition
	\begin{equation}
		\HSS(\rho_\varphi)
		:=
		\sqrt{
			\frac{1}{2}
			\Tr\left[
			\left(\partial_\varphi\rho_\varphi\right)^2
			\right]
		}
		=
		\frac{1}{\sqrt 2}\|X_\varphi\|_{\HS}.
		\label{eq:HSS-definition}
	\end{equation}
	The factor $1/\sqrt2$ is conventional and does not affect any
	contractivity statement.
	
	Every traceless Hermitian operator can be realized as a physical local
	state tangent.  Indeed, if $X=X^\dagger$ and $\Tr X=0$, then
	\begin{equation}
		\rho_\varphi
		=
		\frac{\id}{d}+\varphi X
		\label{eq:local-state-family}
	\end{equation}
	is a density operator for all sufficiently small $|\varphi|$.
	Therefore, contractivity on all physical local tangents is equivalent
	to contractivity on the real vector space of traceless Hermitian
	operators.
	
	Let $\Lambda$ be a linear map independent of $\varphi$.  Then
	\begin{equation}
		\partial_\varphi\Lambda(\rho_\varphi)
		=
		\Lambda(\partial_\varphi\rho_\varphi)
		=
		\Lambda(X_\varphi).
		\label{eq:parameter-independent}
	\end{equation}
	Thus, Hilbert--Schmidt-speed contractivity reduces to
	Hilbert--Schmidt-norm contractivity on traceless Hermitian operators.
	
	\begin{remark}
		The assumption that the channel is independent of the encoded parameter
		is essential.  If $\Lambda=\Lambda_\varphi$, then
		\begin{equation}
			\partial_\varphi\Lambda_\varphi(\rho_\varphi)
			=
			(\partial_\varphi\Lambda_\varphi)(\rho_\varphi)
			+
			\Lambda_\varphi(\partial_\varphi\rho_\varphi),
		\end{equation}
		and the additional term prevents a direct contractivity argument.
	\end{remark}
	
	\section{Unitality and Hilbert--Schmidt contractivity}
	\label{sec:unitality-contractivity}
	
	We begin with a characterization on the full real vector space of
	Hermitian operators.
	
	\begin{theorem}
		\label{thm:unital-iff-full-Hermitian}
		Let
		\[
		\Phi:\mathcal B(\mathcal H_d)\longrightarrow
		\mathcal B(\mathcal H_d)
		\]
		be a positive trace-preserving linear map.  Then the following
		statements are equivalent:
		\begin{enumerate}
			\item $\Phi$ is unital:
			\[
			\Phi(\id)=\id.
			\]
			\item $\Phi$ is Hilbert--Schmidt contractive on every Hermitian
			operator:
			\[
			\|\Phi(X)\|_{\HS}\leq\|X\|_{\HS}
			\quad
			\text{for every }X=X^\dagger.
			\]
		\end{enumerate}
	\end{theorem}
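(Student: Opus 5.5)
We prove the two implications separately, using only positivity, trace preservation, and linearity.

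For (1)$\Rightarrow$(2), we take a Hermitian $X$ and its spectral decomposition $X=\sum_i \lambda_i P_i$, where the $P_i$ are rank-one orthogonal projectors. Then $\Phi(X)=\sum_i\lambda_i\Phi(P_i)$ and
\[
\|\Phi(X)\|_{\HS}^2=\sum_{i,j}\lambda_i\lambda_j\,\Tr[\Phi(P_i)\Phi(P_j)].
\]
We introduce the real matrix $M_{ij}:=\Tr[\Phi(P_i)\Phi(P_j)]$. By positivity of $\Phi$, each $\Phi(P_i)\geq0$, so $M_{ij}\geq0$. Summing over $j$ and using unitality, $\sum_j\Phi(P_j)=\Phi(\id)=\id$, gives $\sum_j M_{ij}=\Tr\Phi(P_i)=1$ by trace preservation. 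Thus $M$ is a symmetric, entrywise nonnegative matrix whose rows sum to one, i.e.\ it is doubly stochastic. A doubly stochastic matrix has operator norm at most one on $\ell^2$; this follows from Schur's test, or from $|\lambda^T M\lambda|\le\sum_{ij}M_{ij}(\lambda_i^2+\lambda_j^2)/2$. Hence $\lambda^TM\lambda\leq\sum_i\lambda_i^2=\|X\|_{\HS}^2$, which proves (2). An alternative route is via the Kadison--Schwarz inequality for unital positive maps on Hermitian elements, $\Phi(X)^2\le\Phi(X^2)$, followed by taking the trace. We prefer the elementary doubly-stochastic argument because it uses only positivity, trace preservation, and unitality.

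For (2)$\Rightarrow$(1), we apply contractivity to $X=\id$, which gives $\|\Phi(\id)\|_{\HS}^2\leq d$. Trace preservation gives $\Tr\Phi(\id)=d$. By Cauchy--Schwarz in the Hilbert--Schmidt inner product,
\[
d=\langle\id,\Phi(\id)\rangle_{\HS}\leq\sqrt d\,\|\Phi(\id)\|_{\HS}\leq d,
\]
so equality holds throughout. Equality in Cauchy--Schwarz forces $\Phi(\id)=c\,\id$, and the trace condition then fixes $c=1$. Equivalently, one can write $\Phi(\id)=\id+Y$ with $\Tr Y=0$, so that $\|\Phi(\id)\|_{\HS}^2=d+\|Y\|_{\HS}^2\leq d$ forces $Y=0$. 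This direction does not use positivity.

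The only nontrivial step is the bound $\lambda^TM\lambda\le\|\lambda\|^2$ in the first direction. The key point there is to check that unitality and trace preservation supply the row-sum identity exactly. Once that identity is established, the inequality is standard.
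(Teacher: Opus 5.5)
Your proof is correct. The converse direction matches the paper's argument: you apply contractivity to $\id$ and use equality in Cauchy--Schwarz. Your remark that positivity is not needed there is accurate, since the paper's observation $\Phi(\id)\ge0$ plays no role in its argument.

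The forward direction takes a genuinely different route. The paper invokes Kadison's inequality $\Phi(X)^2\le\Phi(X^2)$ for unital positive maps and then takes the trace, so trace preservation enters only at the last step. Its appendix justifies Kadison's inequality by two facts: a positive map restricted to the commutative $C^*$-algebra generated by $X$ is completely positive, and unital completely positive maps satisfy the Schwarz inequality. You instead work by hand on the commutative algebra spanned by the rank-one spectral projectors $P_i$ of $X$. The matrix $M_{ij}=\Tr[\Phi(P_i)\Phi(P_j)]$ is entrywise nonnegative because each $\Phi(P_i)\ge0$, and it is symmetric. Its rows sum to one because $\sum_j\Phi(P_j)=\Phi(\id)=\id$ and $\Tr\Phi(P_i)=\Tr P_i=1$. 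The elementary bound $\lambda^{T}M\lambda\le\|\lambda\|^2$ then finishes the argument.

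Your route is self-contained: it needs no $C^*$-algebraic input, and it shows exactly where positivity, unitality and trace preservation each enter. The paper's route yields a stronger intermediate result, the operator inequality $\Phi(X)^2\le\Phi(X^2)$, which holds for any unital positive map even without trace preservation and can be reused elsewhere. For the trace-level contractivity the theorem actually claims, your argument is fully sufficient.
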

	
	\begin{proof}
		Suppose first that $\Phi$ is positive and unital.  Kadison's inequality \cite{Kadison1952} (see Appendix~\ref{app:KS})
		for a unital positive map gives, for every Hermitian $X$, 
		\begin{equation}
			\Phi(X^2)\geq\Phi(X)^2.
			\label{eq:Kadison}
		\end{equation}
		Taking the trace and using trace preservation,
		\begin{align}
			\|\Phi(X)\|_{\HS}^2
			&=
			\Tr[\Phi(X)^2]
			\\
			&\leq
			\Tr[\Phi(X^2)]
			\\
			&=
			\Tr(X^2)
			=
			\|X\|_{\HS}^2.
		\end{align}
		This proves Hilbert--Schmidt contractivity.
		
		Conversely, suppose that $\Phi$ is Hilbert--Schmidt contractive on every
		Hermitian operator.  Apply the assumption to $X=\id$ and define
		\[
		A:=\Phi(\id).
		\]
		Since $\Phi$ is positive and $\id\ge 0$, we have $A=\Phi(\id)\ge 0$.
		Trace preservation gives
		\begin{equation}
			\Tr A=\Tr\id=d.
		\end{equation}
		Contractivity gives
		\begin{equation}
			\Tr(A^2)=\|A\|_{\HS}^2
			\leq
			\|\id\|_{\HS}^2=d.
			\label{eq:A-upper}
		\end{equation}
		On the other hand, the Hilbert--Schmidt Cauchy--Schwarz inequality
		implies
		\begin{equation}
			(\Tr A)^2
			=
			|\Tr(\id A)|^2
			\leq
			\Tr(\id^2)\Tr(A^2)
			=
			d\,\Tr(A^2).
		\end{equation}
		Since $\Tr A=d$, it follows that
		\begin{equation}
			\Tr(A^2)\geq d.
			\label{eq:A-lower}
		\end{equation}
		Equations \eqref{eq:A-upper} and \eqref{eq:A-lower} imply
		$\Tr(A^2)=d$, so equality holds in Cauchy--Schwarz.  Hence $A$ is
		proportional to the identity.  Its trace fixes the proportionality
		constant:
		\begin{equation}
			A=\id.
		\end{equation}
		Therefore, $\Phi$ is unital.
	\end{proof}
	
	\begin{remark}
		The distinction between all Hermitian operators and physical state
		tangents is essential.  Physical state tangents form the smaller
		subspace of traceless Hermitian operators.  Therefore,
		Theorem~\ref{thm:unital-iff-full-Hermitian} implies that every unital
		positive trace-preserving map contracts physical tangents, but the
		converse need not hold: contractivity only on physical tangents does
		not imply unitality.  Indeed, the proof of necessity in the theorem
		tests contractivity on $\id$, which is not traceless and hence is not
		a physical state tangent.  For qubits, every positive trace-preserving
		map contracts all traceless Hermitian operators, including maps that
		are non-unital; see Sec.~\ref{sec:qubit}.
	\end{remark}
	
	\begin{corollary}
		\label{cor:unital-HSS}
		Let $\Phi$ be a positive, trace-preserving, unital map independent of
		$\varphi$.  Then
		\begin{equation}
			\HSS\bigl(\Phi(\rho_\varphi)\bigr)
			\leq
			\HSS(\rho_\varphi)
		\end{equation}
		for every differentiable family of states $\rho_\varphi$.
	\end{corollary}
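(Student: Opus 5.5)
The corollary follows by combining the reduction of Sec.~\ref{sec:hss-tangent} with Theorem~\ref{thm:unital-iff-full-Hermitian}. Fix $\varphi$ and write $X_\varphi=\partial_\varphi\rho_\varphi$, which is Hermitian and traceless. Because $\Phi$ is linear and does not depend on $\varphi$, Eq.~\eqref{eq:parameter-independent} gives
\begin{equation}
	\partial_\varphi\Phi(\rho_\varphi)=\Phi(X_\varphi).
\end{equation}
Hence, by the definition \eqref{eq:HSS-definition},
\begin{equation}
	\HSS\bigl(\Phi(\rho_\varphi)\bigr)=\frac{1}{\sqrt2}\bigl\|\Phi(X_\varphi)\bigr\|_{\HS}.
\end{equation}
Here $\Phi(\rho_\varphi)$ is again a state, since $\Phi$ is positive and trace preserving, so the left-hand side is a well-defined Hilbert--Schmidt speed. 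Differentiability of $\varphi\mapsto\Phi(\rho_\varphi)$ follows from linearity of $\Phi$ on the finite-dimensional space $\mathcal B(\mathcal H_d)$.

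Next I apply the implication $(1)\Rightarrow(2)$ of Theorem~\ref{thm:unital-iff-full-Hermitian}. Since $\Phi$ is positive, trace preserving and unital, it satisfies $\|\Phi(X)\|_{\HS}\le\|X\|_{\HS}$ for every Hermitian $X$. In particular this holds for the Hermitian operator $X_\varphi$. Multiplying by $1/\sqrt2$ gives
\begin{equation}
	\HSS\bigl(\Phi(\rho_\varphi)\bigr)\le\frac{1}{\sqrt2}\|X_\varphi\|_{\HS}=\HSS(\rho_\varphi).
\end{equation}
This holds for every $\varphi$, and hence for every differentiable family of states.

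There is no real obstacle; the only point needing care is the interchange of $\partial_\varphi$ with $\Phi$. This interchange is justified because $\Phi$ is a fixed linear map on a finite-dimensional space, and therefore continuous. The hypothesis that $\Phi$ is independent of $\varphi$ is exactly what is needed here, as the preceding Remark stresses. Tracelessness of $X_\varphi$ is not used, because the theorem already gives contraction on all Hermitian operators.
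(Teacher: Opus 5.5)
Your proposal is correct and takes essentially the same route as the paper: it uses Eq.~\eqref{eq:parameter-independent} to write the tangent of $\Phi(\rho_\varphi)$ as $\Phi(X_\varphi)$, and then applies the implication $(1)\Rightarrow(2)$ of Theorem~\ref{thm:unital-iff-full-Hermitian} to the Hermitian operator $X_\varphi$. Your extra remarks are accurate but go beyond what the paper spells out: that $\Phi(\rho_\varphi)$ is again a state, that the interchange of $\partial_\varphi$ and $\Phi$ is justified, and that tracelessness is not needed.
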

	
	\begin{proof}
		The tangent $X_\varphi=\partial_\varphi\rho_\varphi$ is Hermitian.
		Equation \eqref{eq:parameter-independent} and
		Theorem~\ref{thm:unital-iff-full-Hermitian} therefore give
		\begin{align}
			\HSS\bigl(\Phi(\rho_\varphi)\bigr)
			&=
			\frac{1}{\sqrt2}\|\Phi(X_\varphi)\|_{\HS}
			\\
			&\leq
			\frac{1}{\sqrt2}\|X_\varphi\|_{\HS}
			=
			\HSS(\rho_\varphi).
		\end{align}
	\end{proof}
	
	\section{Dynamical maps, divisibility, and propagator unitality}
	\label{sec:dynamical-maps}
	
	Consider a family of positive trace-preserving dynamical maps
	$\{\Lambda_t\}_{t\geq0}$ acting on a fixed finite-dimensional operator
	algebra, with
	\begin{equation}
		\Lambda_0=\operatorname{id}.
	\end{equation}
	The dynamics is assumed to be independent of the parameter $\varphi$.
	
	\begin{definition}
		The evolution is P-divisible if, for every $t\geq s\geq0$, there exists
		a positive trace-preserving map $V_{t,s}$ such that \cite{NielsenChuang2010}
		\begin{equation}
			\Lambda_t=V_{t,s}\circ\Lambda_s.
			\label{eq:P-divisibility}
		\end{equation}
		It is CP-divisible if the maps $V_{t,s}$ can be chosen completely
		positive and trace preserving.
	\end{definition}
	
	When $\Lambda_s$ is non-invertible, a factor map satisfying
	\eqref{eq:P-divisibility} need not be unique.  None of the following
	results requires invertibility or uniqueness.
	
	\begin{proposition}
		\label{prop:unital-equivalence}
		For a P-divisible evolution with $\Lambda_0=\operatorname{id}$, the
		following statements are equivalent:
		\begin{enumerate}
			\item Every dynamical map is unital:
			\begin{equation}
				\Lambda_t(\id)=\id
				\quad\text{for every }t\geq0.
				\label{eq:endpoint-unitality}
			\end{equation}
			
			\item For every $t\geq s\geq0$, there exists a unital positive
			trace-preserving map $V_{t,s}$ satisfying
			\begin{equation}
				\Lambda_t=V_{t,s}\circ\Lambda_s.
			\end{equation}
			
			\item For every $t\geq s\geq0$, every positive trace-preserving map
			$V_{t,s}$ satisfying
			\begin{equation}
				\Lambda_t=V_{t,s}\circ\Lambda_s
			\end{equation}
			is unital.
		\end{enumerate}
	\end{proposition}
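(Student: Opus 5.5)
Our plan is to prove the cycle of implications (1)$\Rightarrow$(3)$\Rightarrow$(2)$\Rightarrow$(1). Throughout we use that P-divisibility guarantees at least one positive trace-preserving factor $V_{t,s}$ for every pair $t\geq s\geq0$.

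\emph{(1)$\Rightarrow$(3).} Take any positive trace-preserving $V_{t,s}$ with $\Lambda_t=V_{t,s}\circ\Lambda_s$, whether or not it is unique. Apply both sides to $\id$ and use unitality of $\Lambda_s$ and $\Lambda_t$:
\begin{equation}
V_{t,s}(\id)=V_{t,s}\bigl(\Lambda_s(\id)\bigr)=\Lambda_t(\id)=\id .
\end{equation}
The point is that $\id$ always lies in the range of a unital $\Lambda_s$. We only need the value of $V_{t,s}$ on $\Lambda_s(\id)$, so no invertibility is required. Hence every such factor is unital.

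\emph{(3)$\Rightarrow$(2).} P-divisibility supplies at least one positive trace-preserving factor $V_{t,s}$ for each pair $t\geq s$. By (3) this factor is unital, which is exactly (2). This step uses only nonemptiness of the set of factors.

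\emph{(2)$\Rightarrow$(1).} Choose $s=0$ in (2). Since $\Lambda_0=\operatorname{id}$, we get $\Lambda_t=V_{t,0}\circ\operatorname{id}=V_{t,0}$, and $V_{t,0}$ is unital by hypothesis, so $\Lambda_t(\id)=\id$ for every $t$. Equivalently, $\Lambda_t(\id)=V_{t,0}(\Lambda_0(\id))=V_{t,0}(\id)=\id$.

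No step is a genuine obstacle. The only subtlety is non-uniqueness of $V_{t,s}$ when $\Lambda_s$ is not invertible. Statement (3) quantifies over \emph{all} factors, and one might worry that some factor could act non-unitally. However, every factor is pinned at the single point $\Lambda_s(\id)=\id$, which is all that unitality requires. We will state this explicitly to confirm that neither invertibility nor uniqueness enters the proof.
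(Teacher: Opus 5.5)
Your proposal is correct and follows essentially the same route as the paper: the cycle (1)$\Rightarrow$(3)$\Rightarrow$(2)$\Rightarrow$(1), with (1)$\Rightarrow$(3) obtained by evaluating an arbitrary factor on $\id=\Lambda_s(\id)$, (3)$\Rightarrow$(2) from the existence of a factor guaranteed by P-divisibility, and (2)$\Rightarrow$(1) by setting $s=0$. Your observation that neither invertibility nor uniqueness of $V_{t,s}$ is needed also matches the remark the paper gives right after its proof.
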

	
	\begin{proof}
		Assume that statement 1 holds. Thus, for every \(r\ge 0\),
		\[
		\Lambda_r(\id)=\id .
		\]
		Fix \(t\ge s\ge 0\), and let \(V_{t,s}\) be any admissible factor map satisfying
		\[
		\Lambda_t=V_{t,s}\circ \Lambda_s .
		\]
		Since \(\Lambda_s\) is unital, \(\Lambda_s(\id)=\id\). Hence
		\[
		V_{t,s}(\id)
		=
		V_{t,s}\bigl(\Lambda_s(\id)\bigr)
		=
		\Lambda_t(\id)
		=
		\id .
		\]
		Therefore every admissible propagator \(V_{t,s}\) is unital.
		
		Statement 3 immediately implies statement 2 because P-divisibility
		guarantees the existence of at least one positive trace-preserving
		factor map for each pair of times.
		
		Finally, assume statement 2 and choose $s=0$.  Since
		$\Lambda_0=\operatorname{id}$,
		\begin{equation}
			\Lambda_t
			=
			V_{t,0}\circ\Lambda_0
			=
			V_{t,0}.
		\end{equation}
		The map $V_{t,0}$ is unital, and therefore $\Lambda_t$ is unital.
		This proves statement 1.
	\end{proof}
	
	\begin{remark}
		The proof of Proposition~\ref{prop:unital-equivalence} does not define
		$V_{t,s}$ by $\Lambda_t\Lambda_s^{-1}$ and does not assume that
		$\Lambda_s$ is invertible.  If the dynamical maps are unital, the
		calculation
		\[
		V_{t,s}(\id)
		=
		V_{t,s}(\Lambda_s(\id))
		=
		\Lambda_t(\id)
		=
		\id
		\]
		applies separately to every admissible factor map.  Propagator
		non-uniqueness therefore creates no ambiguity in the conclusion.
	\end{remark}
	
	The next result separates endpoint contraction from full temporal
	monotonicity.
	
	\begin{theorem}
		\label{thm:endpoint-vs-monotonicity}
		Let $\rho_\varphi$ be a differentiable family of states and define
		\[
		\rho_\varphi(t)=\Lambda_t(\rho_\varphi),
		\qquad
		X_\varphi(t)=\partial_\varphi\rho_\varphi(t).
		\]
		Then:
		\begin{enumerate}
			\item If every $\Lambda_t$ is positive, trace preserving, and
			unital, then
			\begin{equation}
				\HSS(t)\leq\HSS(0)
				\quad\text{for every }t\geq0.
				\label{eq:endpoint-contraction}
			\end{equation}
			
			\item If, in addition, the evolution is P-divisible, then
			\begin{equation}
				\HSS(t)\leq\HSS(s)
				\quad\text{for every }t\geq s\geq0.
				\label{eq:full-monotonicity}
			\end{equation}
		\end{enumerate}
	\end{theorem}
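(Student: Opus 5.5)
The plan is to reduce both parts to Theorem~\ref{thm:unital-iff-full-Hermitian} through the identity \eqref{eq:parameter-independent}. Because the maps $\Lambda_t$ and the propagators $V_{t,s}$ do not depend on $\varphi$, the tangent transforms linearly:
\[
X_\varphi(t)=\partial_\varphi\Lambda_t(\rho_\varphi)=\Lambda_t(X_\varphi),
\qquad
X_\varphi(0)=X_\varphi .
\]
Here we use $\Lambda_0=\operatorname{id}$. By \eqref{eq:HSS-definition}, every statement about $\HSS$ therefore becomes a statement about $\|\cdot\|_{\HS}$ of Hermitian tangent operators.

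For part 1, fix $t\geq0$. The map $\Lambda_t$ is positive, trace preserving and unital, so Corollary~\ref{cor:unital-HSS} applies with $\Phi=\Lambda_t$. This gives $\HSS(t)=\tfrac{1}{\sqrt2}\|\Lambda_t(X_\varphi)\|_{\HS}\leq\tfrac{1}{\sqrt2}\|X_\varphi\|_{\HS}=\HSS(0)$. No divisibility is used here.

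For part 2, fix $t\geq s\geq0$. P-divisibility supplies some positive trace-preserving $V_{t,s}$ with $\Lambda_t=V_{t,s}\circ\Lambda_s$. The dynamical maps are unital by the hypothesis of part 1, so Proposition~\ref{prop:unital-equivalence} (statement 1 $\Rightarrow$ statement 3) shows that this $V_{t,s}$ is unital, whichever factor map was chosen. Linearity and $\varphi$-independence then give
\[
X_\varphi(t)=V_{t,s}\bigl(\Lambda_s(X_\varphi)\bigr)=V_{t,s}\bigl(X_\varphi(s)\bigr).
\]
Here $X_\varphi(s)$ is Hermitian. Theorem~\ref{thm:unital-iff-full-Hermitian} applied to $V_{t,s}$ yields $\|X_\varphi(t)\|_{\HS}\leq\|X_\varphi(s)\|_{\HS}$, which is \eqref{eq:full-monotonicity} after multiplying by $1/\sqrt2$.

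The only step needing care is the unitality of the intermediate propagator, since P-divisibility alone says nothing about it. Proposition~\ref{prop:unital-equivalence} settles this without invertibility of $\Lambda_s$, because its argument applies to each admissible factor map separately. Beyond that, we only have to check that $\varphi$-independence of $V_{t,s}$ justifies pulling $\partial_\varphi$ through, which holds because $V_{t,s}$ is a fixed linear map on a finite-dimensional space.
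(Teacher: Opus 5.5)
Your proposal is correct and follows the paper's proof essentially step for step. Part 1 applies Corollary~\ref{cor:unital-HSS} to $\Lambda_t$, and part 2 uses Proposition~\ref{prop:unital-equivalence} to make every admissible factor map $V_{t,s}$ unital before applying unital contraction to $X_\varphi(t)=V_{t,s}(X_\varphi(s))$; citing Theorem~\ref{thm:unital-iff-full-Hermitian} directly rather than its corollary is only a cosmetic difference.
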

	
	\begin{proof}
		Parameter independence gives
		\begin{equation}
			X_\varphi(t)=\Lambda_t(X_\varphi(0)).
		\end{equation}
		If $\Lambda_t$ is unital, Corollary~\ref{cor:unital-HSS} gives
		\[
		\HSS(t)\leq\HSS(0).
		\]
		
		Now suppose that the evolution is P-divisible.  By
		Proposition~\ref{prop:unital-equivalence}, every admissible positive
		trace-preserving propagator is unital.  Moreover,
		\begin{align}
			X_\varphi(t)
			&=
			\partial_\varphi
			\left[
			V_{t,s}(\rho_\varphi(s))
			\right]
			\\
			&=
			V_{t,s}(X_\varphi(s)).
		\end{align}
		Applying Corollary~\ref{cor:unital-HSS} to $V_{t,s}$ gives
		\[
		\HSS(t)
		=
		\frac{1}{\sqrt2}\|V_{t,s}(X_\varphi(s))\|_{\HS}
		\leq
		\frac{1}{\sqrt2}\|X_\varphi(s)\|_{\HS}
		=
		\HSS(s).
		\]
	\end{proof}
	
	\begin{remark}
		Unitality of the endpoint maps alone gives only
		$\HSS(t)\leq\HSS(0)$.  It does not by itself compare
		$\HSS(t)$ with $\HSS(s)$ for arbitrary positive times $t\geq s$.
		P-divisibility supplies the physical propagator needed for that
		comparison.
	\end{remark}
	
	\begin{corollary}
		For a unital evolution, an interval on which
		\[
		\frac{d}{dt}\HSS(t)>0
		\]
		rules out P-divisibility on that interval.
	\end{corollary}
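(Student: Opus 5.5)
The plan is to argue by contraposition, using part~2 of Theorem~\ref{thm:endpoint-vs-monotonicity}. First we have to say what ``P-divisibility on an interval'' means. Let $I=[a,b]$. We take it to mean that for every $t\geq s$ in $I$ there is a positive trace-preserving map $V_{t,s}$ with $\Lambda_t=V_{t,s}\circ\Lambda_s$. ``Unital evolution'' means $\Lambda_r(\id)=\id$ for all $r\ge 0$. Since $\Lambda_0=\operatorname{id}$ and we do not want to require $0\in I$, we will use the first part of the proof of Proposition~\ref{prop:unital-equivalence} directly, and not the proposition as stated. That part only uses $\Lambda_s(\id)=\id$ and $\Lambda_t(\id)=\id$, so it shows that every admissible propagator $V_{t,s}$ with $s,t\in I$ is unital:
\[
V_{t,s}(\id)=V_{t,s}(\Lambda_s(\id))=\Lambda_t(\id)=\id .
\]

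Next, suppose for contradiction that the evolution is P-divisible on $I$ and that $\frac{d}{dt}\HSS(t)>0$ on $I$. As in the proof of Theorem~\ref{thm:endpoint-vs-monotonicity}, parameter independence gives $X_\varphi(t)=V_{t,s}(X_\varphi(s))$ for $s\le t$ in $I$. Corollary~\ref{cor:unital-HSS}, applied to the unital positive trace-preserving map $V_{t,s}$, then gives $\HSS(t)\le\HSS(s)$. So $\HSS$ is non-increasing on $I$.

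Finally, fix $s<t$ in $I$. Since $\HSS$ is differentiable with positive derivative on $I$, the mean value theorem gives $\HSS(t)-\HSS(s)=(t-s)\,\HSS'(\xi)>0$ for some $\xi\in(s,t)$. This contradicts monotonicity. Strict positivity on a nondegenerate interval is needed here; a positive derivative at a single point would only contradict monotonicity locally, and the same mean value argument would handle that case too.

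The main obstacle is definitional rather than technical. Proposition~\ref{prop:unital-equivalence} is stated for global P-divisibility anchored at $\Lambda_0$. So we must check that its unitality-of-propagators argument localizes to the interval, which it does because it uses only endpoint unitality. We also need differentiability of $\HSS(t)$ in $t$, which we assume implicitly in the statement.
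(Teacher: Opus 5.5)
Your proposal is correct and follows the paper's implicit argument: the corollary is the contrapositive of Theorem~\ref{thm:endpoint-vs-monotonicity}(2), obtained from the computation $V_{t,s}(\id)=V_{t,s}(\Lambda_s(\id))=\Lambda_t(\id)=\id$ together with Corollary~\ref{cor:unital-HSS}. Your explicit localization of that computation to times in $I$, instead of invoking Proposition~\ref{prop:unital-equivalence} (which is stated for global P-divisibility), is a sound tightening of what the paper leaves unsaid.

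One minor quibble concerns your aside. A positive derivative at a single point of $I$ already contradicts monotonicity on $I$, but the reason is the sign of the difference quotients of a nonincreasing function, not the mean value theorem. The mean value theorem needs positivity throughout the subinterval. This does not affect your main argument.
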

	
	\section{The special case of qubits}
	\label{sec:qubit}
	
	For qubits, traceless Hermitian operators have a special relation
	between their trace and Hilbert--Schmidt norms.  This leads to a
	stronger result that does not require unitality.
	
	We first establish the trace-norm property needed in the proof.
	
	\begin{lemma}
		\label{lem:trace-contractivity}
		Let $\Phi$ be a positive trace-preserving map.  Then
		\begin{equation}
			\|\Phi(X)\|_1\leq\|X\|_1
		\end{equation}
		for every Hermitian operator $X$.
	\end{lemma}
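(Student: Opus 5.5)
The plan is to use the Jordan decomposition of a Hermitian operator, together with positivity and trace preservation of $\Phi$.

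Let $X=X^\dagger$. By the spectral theorem we write
\begin{equation}
X=X_+-X_-, \qquad X_\pm\geq0, \qquad X_+X_-=0,
\end{equation}
where $X_+$ (respectively $X_-$) collects the positive (respectively negative) part of the spectrum. Then
\begin{equation}
\|X\|_1=\Tr|X|=\Tr X_++\Tr X_-.
\end{equation}
By linearity, $\Phi(X)=\Phi(X_+)-\Phi(X_-)$. Positivity of $\Phi$ gives $\Phi(X_\pm)\geq0$, and trace preservation gives $\Tr\Phi(X_\pm)=\Tr X_\pm$.

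The key step is to show that for positive semidefinite $A,B$,
\begin{equation}
\|A-B\|_1\leq\Tr A+\Tr B .
\end{equation}
For this we use the variational formula for a Hermitian operator $Y$,
\begin{equation}
\|Y\|_1=\max\{\Tr(UY): U=U^\dagger,\ -\id\leq U\leq\id\}.
\end{equation}
The maximum is attained at $U=P_+-P_-$, with $P_\pm$ the spectral projectors of $Y$ onto its positive and negative eigenspaces. Choose $U$ attaining this maximum for $Y=A-B$. Since $-\id\leq U\leq\id$ and $A,B\geq0$, we have $\Tr(UA)\leq\Tr A$ and $-\Tr(UB)\leq\Tr B$. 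Adding these gives the inequality.

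Applying this with $A=\Phi(X_+)$ and $B=\Phi(X_-)$ yields
\begin{equation}
\|\Phi(X)\|_1\leq\Tr\Phi(X_+)+\Tr\Phi(X_-)=\Tr X_++\Tr X_-=\|X\|_1 .
\end{equation}

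No step is a real obstacle. The only point that needs care is the inequality $\|A-B\|_1\leq\Tr A+\Tr B$ for positive $A,B$. Plain triangle-inequality reasoning already suffices there, since $\|A\|_1=\Tr A$ for $A\geq0$. The argument uses neither complete positivity nor unitality.
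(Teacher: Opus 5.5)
Your proof is correct and follows the paper's argument: the Jordan decomposition $X=X_+-X_-$, positivity to get $\Phi(X_\pm)\geq0$, and trace preservation to get $\Tr\Phi(X_\pm)=\Tr X_\pm$. The only difference is that you prove $\|A-B\|_1\leq\Tr A+\Tr B$ through the variational characterization of the trace norm, while the paper uses the triangle inequality together with $\|A\|_1=\Tr A$ for $A\geq0$, exactly as your closing remark suggests.
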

	
	\begin{proof}
		Let
		\begin{equation}
			X=X_+-X_-
		\end{equation}
		be the Jordan decomposition of $X$, where
		\begin{equation}
			X_\pm\geq0,
			\qquad
			X_+X_-=0.
		\end{equation}
		Positivity of $\Phi$ implies
		\[
		\Phi(X_\pm)\geq0.
		\]
		Using the triangle inequality, positivity, and trace preservation,
		\begin{align}
			\|\Phi(X)\|_1
			&=
			\|\Phi(X_+)-\Phi(X_-)\|_1
			\\
			&\leq
			\|\Phi(X_+)\|_1+\|\Phi(X_-)\|_1
			\\
			&=
			\Tr\Phi(X_+)+\Tr\Phi(X_-)
			\\
			&=
			\Tr X_+ + \Tr X_-
			\\
			&=
			\|X\|_1.
		\end{align}
	\end{proof}
	
	\begin{remark}
		Lemma~\ref{lem:trace-contractivity} is stated only for Hermitian
		operators.  No trace-norm contractivity claim for arbitrary
		non-Hermitian inputs is needed.
	\end{remark}
	
	\begin{theorem}
		\label{thm:qubit-contraction}
		Let
		\[
		\Phi:\mathcal B(\mathbb C^2)\longrightarrow
		\mathcal B(\mathbb C^2)
		\]
		be positive and trace preserving.  Then
		\begin{equation}
			\|\Phi(X)\|_{\HS}\leq\|X\|_{\HS}
		\end{equation}
		for every traceless Hermitian operator $X$.
	\end{theorem}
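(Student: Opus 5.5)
The plan is to combine Lemma~\ref{lem:trace-contractivity} with the fact that, for qubits, the trace norm and the Hilbert--Schmidt norm are proportional on traceless Hermitian operators. The output $\Phi(X)$ is itself traceless, since $\Tr\Phi(X)=\Tr X=0$ by trace preservation. It is also Hermitian, because a positive map is Hermiticity preserving. To see this, write $X=X_+-X_-$ with $X_\pm\ge 0$; then $\Phi(X)=\Phi(X_+)-\Phi(X_-)$ is a difference of positive operators. So both $X$ and $\Phi(X)$ lie in the class of traceless Hermitian $2\times 2$ matrices.

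The key step is the following identity. Any traceless Hermitian $Y$ on $\mathbb C^2$ has eigenvalues $\lambda$ and $-\lambda$ with $\lambda\ge 0$. Hence
\begin{equation}
\|Y\|_1 = 2\lambda,
\qquad
\|Y\|_{\HS}=\sqrt{2}\,\lambda,
\end{equation}
which gives $\|Y\|_{\HS}=\|Y\|_1/\sqrt2$. Equivalently, in Bloch form $Y=\bm r\cdot\bm\sigma$, both norms are multiples of $|\bm r|$.

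Applying this identity to $Y=\Phi(X)$ and $Y=X$, and using Lemma~\ref{lem:trace-contractivity} in between, yields
\begin{equation}
\|\Phi(X)\|_{\HS}=\tfrac{1}{\sqrt2}\|\Phi(X)\|_1\le \tfrac{1}{\sqrt2}\|X\|_1=\|X\|_{\HS},
\end{equation}
which is the claim. Unitality is never used: the non-unital part of $\Phi$, namely the translation of the Bloch ball, does not act on tangents, which are differences of states.

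The only point needing care is checking that $\Phi(X)$ is traceless and Hermitian, so that the norm identity applies on the output side. This is why the argument is specific to $d=2$. In higher dimensions, a traceless Hermitian matrix with fixed trace norm can have different Hilbert--Schmidt norms, so the two norms are no longer proportional and trace-norm contractivity does not transfer.
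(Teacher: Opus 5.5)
Your proof is correct and follows the paper's argument essentially step for step. Like the paper, you show that $\Phi(X)$ is Hermitian and traceless, use the qubit identity $\|Y\|_1=\sqrt2\,\|Y\|_{\HS}$ for traceless Hermitian $Y$, and transfer trace-norm contractivity from Lemma~\ref{lem:trace-contractivity}.
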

	
	\begin{proof}
		First, positivity implies that $\Phi$ preserves Hermiticity.  Indeed,
		every Hermitian operator is the difference of two positive operators,
		and the images of those positive operators are Hermitian.
		
		Trace preservation gives
		\begin{equation}
			\Tr\Phi(X)=\Tr X=0.
		\end{equation}
		Thus both $X$ and $\Phi(X)$ are traceless Hermitian operators on a
		two-dimensional Hilbert space.
		
		The eigenvalues of a traceless Hermitian qubit operator have the form
		$\{\lambda,-\lambda\}$.  Therefore,
		\begin{equation}
			\|X\|_1=2|\lambda|,
			\qquad
			\|X\|_{\HS}=\sqrt2|\lambda|,
		\end{equation}
		and hence
		\begin{equation}
			\|X\|_1=\sqrt2\,\|X\|_{\HS}.
			\label{eq:qubit-norm-relation}
		\end{equation}
		The same identity holds for $\Phi(X)$.  Lemma
		\ref{lem:trace-contractivity} consequently gives
		\begin{align}
			\sqrt2\,\|\Phi(X)\|_{\HS}
			&=
			\|\Phi(X)\|_1
			\\
			&\leq
			\|X\|_1
			=
			\sqrt2\,\|X\|_{\HS}.
		\end{align}
		Dividing by $\sqrt2$ proves the result.
	\end{proof}
	
	\begin{corollary}
		\label{cor:qubit-P-divisible}
		For every parameter-independent P-divisible qubit evolution,
		\begin{equation}
			\HSS(t)\leq\HSS(s)
			\quad
			\text{for all }t\geq s\geq0,
		\end{equation}
		regardless of whether the dynamics is unital.
	\end{corollary}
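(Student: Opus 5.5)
The plan is to combine P-divisibility with Theorem~\ref{thm:qubit-contraction}, following the argument used in part~2 of Theorem~\ref{thm:endpoint-vs-monotonicity} but with the qubit contraction result in place of the unital one.

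First fix $t\geq s\geq0$. By P-divisibility there is a positive trace-preserving map $V_{t,s}$ with $\Lambda_t=V_{t,s}\circ\Lambda_s$. We do not need this map to be unique or invertible; any admissible factor will do.

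Next, the dynamics does not depend on $\varphi$, so Eq.~\eqref{eq:parameter-independent} gives
\begin{equation}
X_\varphi(t)=\partial_\varphi\bigl[V_{t,s}(\rho_\varphi(s))\bigr]=V_{t,s}(X_\varphi(s)).
\end{equation}

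We then check that $X_\varphi(s)$ is a valid input for Theorem~\ref{thm:qubit-contraction}. Since $\rho_\varphi(s)=\Lambda_s(\rho_\varphi)$ is a differentiable family of qubit density operators, its derivative $X_\varphi(s)$ is Hermitian and traceless. The same conclusion also follows directly: $X_\varphi(s)=\Lambda_s(X_\varphi(0))$, and a positive trace-preserving map sends traceless Hermitian operators to traceless Hermitian operators.

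Applying Theorem~\ref{thm:qubit-contraction} to $\Phi=V_{t,s}$ and $X=X_\varphi(s)$ gives
\begin{equation}
\HSS(t)=\frac{1}{\sqrt2}\|V_{t,s}(X_\varphi(s))\|_{\HS}\leq\frac{1}{\sqrt2}\|X_\varphi(s)\|_{\HS}=\HSS(s).
\end{equation}
Unitality is never used at any step, which accounts for the ``regardless of unitality'' clause in the statement.

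There is no serious obstacle here. The only point to state explicitly is that the tangent at time $s$ remains traceless and Hermitian, because Theorem~\ref{thm:qubit-contraction} applies only to traceless inputs. This tracelessness is exactly what lets us drop the unitality assumption that Theorem~\ref{thm:unital-iff-full-Hermitian} needs for general Hermitian operators.
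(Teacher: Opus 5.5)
Your proposal is correct and follows the paper's proof essentially step for step: P-divisibility supplies a positive trace-preserving $V_{t,s}$, parameter independence gives $X_\varphi(t)=V_{t,s}(X_\varphi(s))$ with $X_\varphi(s)$ traceless and Hermitian, and Theorem~\ref{thm:qubit-contraction} then yields $\HSS(t)\leq\HSS(s)$. You also check explicitly that the tangent at time $s$ stays traceless and Hermitian, which the paper asserts only briefly.
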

	
	\begin{proof}
		For every physical state family, $X_\varphi(s)$ is traceless and
		Hermitian.  P-divisibility gives
		\[
		X_\varphi(t)=V_{t,s}(X_\varphi(s)),
		\]
		where $V_{t,s}$ is positive and trace preserving.  Theorem
		\ref{thm:qubit-contraction} then yields
		\[
		\|X_\varphi(t)\|_{\HS}
		\leq
		\|X_\varphi(s)\|_{\HS}.
		\]
		Equation \eqref{eq:HSS-definition} gives the claimed inequality.
	\end{proof}
	
	\begin{remark}
		For qubits, an increase of the Hilbert--Schmidt speed rules out
		P-divisibility without any unitality assumption.  This conclusion is
		stronger than the corresponding statement in general dimensions.
		\par This qubit exceptionalism is consistent with earlier observations that,
		for two-level systems, the Hilbert--Schmidt distance is always
		monotonically decreasing, unlike the generic situation in dimensions
		greater than two \cite{WangSchirmer2009}.
	\end{remark}
	
	\section{Generator-level contraction for unital GKSL dynamics}
	\label{sec:gksl-generator}
	
	We next derive a differential version of the HSS contraction result for
	time-local GKSL dynamics. In this section we assume that the estimated
	parameter $\varphi$ is encoded only in the initial state, so that the
	generator $\mathcal L_t$ is independent of $\varphi$. If
	\[
	\rho_\varphi(t)
	\]
	satisfies
	\[
	\frac{d}{dt}\rho_\varphi(t)=\mathcal L_t(\rho_\varphi(t)),
	\]
	then the tangent operator
	\[
	X(t):=\partial_\varphi \rho_\varphi(t)
	\]
	obeys the homogeneous equation
	\begin{equation}
		\dot X(t)=\mathcal L_t(X(t)).
		\label{eq:tangent-homogeneous}
	\end{equation}
	We assume throughout that $X(t)$ is Hermitian.
	
	Let the time-local generator have the
	Gorini--Kossakowski--Sudarshan--Lindblad (GKSL) form
	\cite{GKSL2,GKSL1},
	\begin{align}
		\mathcal L_t(X)
		&=
		-i[H_t,X]
		\nonumber\\
		&\quad+
		\sum_\alpha\gamma_\alpha(t)
		\left(
		L_{\alpha,t}XL_{\alpha,t}^\dagger
		-
		\frac12\{L_{\alpha,t}^\dagger L_{\alpha,t},X\}
		\right),
		\label{eq:GKSL}
	\end{align}
	with
	\begin{equation}
		\gamma_\alpha(t)\ge 0.
	\end{equation}
	Nonnegative rates correspond to CP-divisible time-local dynamics.
	
	Define
	\begin{equation}
		\mathcal D_L(X)
		:=
		LXL^\dagger-\frac12\{L^\dagger L,X\}.
	\end{equation}
	
	\begin{lemma}
		\label{lem:dissipator-identity}
		For every Hermitian $X$ and arbitrary operator $L$,
		\begin{equation}
			2\Tr[X\mathcal D_L(X)]
			=
			-\|[L,X]\|_{\HS}^2
			+
			\Tr\!\left[
			X^2(LL^\dagger-L^\dagger L)
			\right].
			\label{eq:general-dissipator-identity}
		\end{equation}
	\end{lemma}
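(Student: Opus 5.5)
Since this identity is purely algebraic, I plan to expand both sides in terms of traces of products of $X$, $L$ and $L^\dagger$, and then match terms using cyclicity of the trace together with $X=X^\dagger$. No earlier result of the paper is needed; the identity holds for arbitrary $L$ and uses only Hermiticity of $X$.

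\emph{Left-hand side.} Inserting the definition of $\mathcal D_L$ gives
\begin{equation*}
2\Tr[X\mathcal D_L(X)]
=2\Tr(XLXL^\dagger)-\Tr(XL^\dagger LX)-\Tr(X^2L^\dagger L),
\end{equation*}
and by cyclicity $\Tr(XL^\dagger LX)=\Tr(X^2L^\dagger L)$. Hence
\begin{equation*}
2\Tr[X\mathcal D_L(X)]=2\Tr(XLXL^\dagger)-2\Tr(X^2L^\dagger L).
\end{equation*}

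\emph{Commutator norm.} Because $X$ is Hermitian, $[L,X]^\dagger=XL^\dagger-L^\dagger X$. Therefore
\begin{equation*}
\|[L,X]\|_{\HS}^2=\Tr\bigl[(XL^\dagger-L^\dagger X)(LX-XL)\bigr].
\end{equation*}
Expanding into four terms and applying cyclicity gives $\Tr(XL^\dagger LX)=\Tr(X^2L^\dagger L)$, $\Tr(L^\dagger XXL)=\Tr(X^2LL^\dagger)$, and each cross term equals $-\Tr(XLXL^\dagger)$. So
\begin{equation*}
\|[L,X]\|_{\HS}^2=\Tr(X^2L^\dagger L)+\Tr(X^2LL^\dagger)-2\Tr(XLXL^\dagger).
\end{equation*}

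\emph{Comparison.} Adding $\Tr[X^2(LL^\dagger-L^\dagger L)]$ to $-\|[L,X]\|_{\HS}^2$, the $\Tr(X^2LL^\dagger)$ terms cancel. What remains is $2\Tr(XLXL^\dagger)-2\Tr(X^2L^\dagger L)$, which matches the left-hand side exactly.

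The only step needing care is the cyclic rearrangement of the two cross terms in the commutator expansion. Specifically, $\Tr(XL^\dagger XL)=\Tr(XLXL^\dagger)$ follows from a cyclic shift by two factors. I expect no other obstacle.
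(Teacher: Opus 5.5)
Your proof is correct and follows essentially the same route as the paper. Both arguments expand $\|[L,X]\|_{\HS}^2$ into four traces using $X=X^\dagger$, reduce the two cross terms to $-2\Tr(XLXL^\dagger)$ by cyclicity, and reduce $2\Tr[X\mathcal D_L(X)]$ to $2\Tr(XLXL^\dagger)-2\Tr(X^2L^\dagger L)$; the only cosmetic difference is that you compare the two sides directly, while the paper solves for $2\Tr(XLXL^\dagger)$ and substitutes.
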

	
	\begin{proof}
		We expand both sides explicitly. Since $X=X^\dagger$,
		\begin{align}
			\|[L,X]\|_{\HS}^2
			&=
			\Tr\!\left[(LX-XL)^\dagger(LX-XL)\right]
			\nonumber\\
			&=
			\Tr\!\left[
			(XL^\dagger-L^\dagger X)(LX-XL)
			\right]
			\nonumber\\
			&=
			\Tr(XL^\dagger LX)
			-\Tr(XL^\dagger XL)
			\nonumber\\
			&\quad
			-\Tr(L^\dagger XLX)
			+\Tr(L^\dagger X^2L).
			\label{eq:commutator-expanded-raw}
		\end{align}
		By cyclicity of the trace,
		\begin{align}
			\Tr(XL^\dagger LX)
			&=
			\Tr(L^\dagger L X^2),
			\\
			\Tr(L^\dagger X^2L)
			&=
			\Tr(LL^\dagger X^2),
			\\
			\Tr(XL^\dagger XL)
			&=
			\Tr(LXL^\dagger X),
			\\
			\Tr(L^\dagger XLX)
			&=
			\Tr(XLXL^\dagger).
		\end{align}
		The two cross terms are equal by cyclicity:
		\[
		\Tr(LXL^\dagger X)=\Tr(XLXL^\dagger).
		\]
		Therefore
		\begin{equation}
			\|[L,X]\|_{\HS}^2
			=
			\Tr(L^\dagger L X^2)
			+
			\Tr(LL^\dagger X^2)
			-
			2\Tr(XLXL^\dagger).
			\label{eq:commutator-expansion}
		\end{equation}
		
		On the other hand, from the definition of $\mathcal D_L$,
		\begin{align}
			2\Tr[X\mathcal D_L(X)]
			&=
			2\Tr(XLXL^\dagger)
			-
			\Tr\!\left(XL^\dagger LX\right)
			-
			\Tr\!\left(X^2L^\dagger L\right)
			\nonumber\\
			&=
			2\Tr(XLXL^\dagger)
			-
			2\Tr(L^\dagger L X^2),
			\label{eq:dissipator-expansion}
		\end{align}
		where the second equality again uses cyclicity of the trace.
		
		Using \eqref{eq:commutator-expansion}, we obtain
		\begin{align}
			2\Tr(XLXL^\dagger)
			&=
			-\|[L,X]\|_{\HS}^2
			+
			\Tr(L^\dagger L X^2)
			+
			\Tr(LL^\dagger X^2).
		\end{align}
		Substituting this identity into \eqref{eq:dissipator-expansion} gives
		\begin{align}
			2\Tr[X\mathcal D_L(X)]
			&=
			-\|[L,X]\|_{\HS}^2
			+
			\Tr(LL^\dagger X^2)
			-
			\Tr(L^\dagger L X^2)
			\nonumber\\
			&=
			-\|[L,X]\|_{\HS}^2
			+
			\Tr\!\left[
			X^2(LL^\dagger-L^\dagger L)
			\right],
		\end{align}
		where the last step uses cyclicity of the trace. This proves
		\eqref{eq:general-dissipator-identity}.
	\end{proof}
	
	The generator is unital when
	\begin{equation}
		\mathcal L_t(\id)=0.
	\end{equation}
	For the representation \eqref{eq:GKSL}, this is equivalent to
	\begin{equation}
		\sum_\alpha
		\gamma_\alpha(t)
		\left(
		L_{\alpha,t}L_{\alpha,t}^\dagger
		-
		L_{\alpha,t}^\dagger L_{\alpha,t}
		\right)
		=
		0.
		\label{eq:generator-unitality}
	\end{equation}
	
	\begin{theorem}
		\label{thm:generator-contraction}
		Assume that the parameter $\varphi$ is encoded only in the initial state,
		so that the tangent operator $X(t)=\partial_\varphi\rho_\varphi(t)$
		satisfies \eqref{eq:tangent-homogeneous}. Let the evolution admit the
		time-local GKSL form \eqref{eq:GKSL} with $\gamma_\alpha(t)\ge 0$ and
		$\mathcal L_t(\id)=0$. Then
		\begin{empheq}[box=\widefbox]{equation}
			\frac{d}{dt}\HSS^2(t)
			=
			-\frac12
			\sum_\alpha
			\gamma_\alpha(t)
			\|[L_{\alpha,t},X(t)]\|_{\HS}^2
			\le 0.
			\label{eq:HSS-generator}
		\end{empheq}
		In particular, $\HSS^2(t)$ is nonincreasing in time. Since
		$\HSS(t)\ge 0$ and the square-root function is increasing on
		$[0,\infty)$, it follows that $\HSS(t)$ itself is nonincreasing.
		Equivalently, for $t\ge s$,
		\[
		\HSS^2(t)\le \HSS^2(s)
		\quad\Longrightarrow\quad
		\HSS(t)\le \HSS(s).
		\]
		Thus every unital CP-divisible GKSL evolution is Hilbert--Schmidt-speed
		contractive along the tangent flow:
		\[
		t\ge s
		\quad\Longrightarrow\quad
		\HSS(t)\le \HSS(s).
		\]
	\end{theorem}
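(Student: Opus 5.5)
The plan is to differentiate $\HSS^2(t)=\tfrac12\Tr[X(t)^2]$ directly along the tangent flow \eqref{eq:tangent-homogeneous} and then evaluate the resulting expression term by term using Lemma~\ref{lem:dissipator-identity}. Since $X(t)$ is Hermitian and the matrix entries are differentiable in $t$, we have $\tfrac{d}{dt}\Tr[X^2]=\Tr[\dot X X+X\dot X]=2\Tr[X\dot X]$ by cyclicity. Consequently
\begin{equation*}
\frac{d}{dt}\HSS^2(t)=\Tr\bigl[X(t)\,\mathcal L_t(X(t))\bigr].
\end{equation*}

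Next I insert the GKSL form \eqref{eq:GKSL} and split $\mathcal L_t$ into its Hamiltonian and dissipative parts. The Hamiltonian contribution vanishes, because $\Tr(X[H_t,X])=\Tr(XH_tX)-\Tr(X^2H_t)=0$ by cyclicity. Each dissipative term $\gamma_\alpha\Tr[X\mathcal D_{L_{\alpha,t}}(X)]$ is rewritten with Lemma~\ref{lem:dissipator-identity}, which gives
\begin{equation*}
\Tr[X\mathcal L_t(X)]=-\frac12\sum_\alpha\gamma_\alpha(t)\|[L_{\alpha,t},X]\|_{\HS}^2+\frac12\Tr\Bigl[X^2\sum_\alpha\gamma_\alpha(t)\bigl(L_{\alpha,t}L_{\alpha,t}^\dagger-L_{\alpha,t}^\dagger L_{\alpha,t}\bigr)\Bigr].
\end{equation*}

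The main step is disposing of the second term using unitality. From \eqref{eq:GKSL} one computes $\mathcal L_t(\id)=\sum_\alpha\gamma_\alpha(L_\alpha L_\alpha^\dagger-L_\alpha^\dagger L_\alpha)$, since the commutator with $\id$ vanishes. Hence $\mathcal L_t(\id)=0$ is exactly \eqref{eq:generator-unitality}, and the operator multiplying $X^2$ is zero. This yields the boxed identity \eqref{eq:HSS-generator}. Its right-hand side is $\le0$ because $\gamma_\alpha(t)\ge0$ and the norms are nonnegative.

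Finally, monotonicity follows from integrating $\tfrac{d}{dt}\HSS^2\le0$ over $[s,t]$, which gives $\HSS^2(t)\le\HSS^2(s)$, and then applying monotonicity of the square root on $[0,\infty)$. The only place needing care is regularity: I would assume $t\mapsto\mathcal L_t$ is, for instance, piecewise continuous, so that $X(t)$ is absolutely continuous and the fundamental theorem of calculus applies. Otherwise the argument is a direct computation.
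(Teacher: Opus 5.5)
Your proposal is correct and follows the paper's proof essentially step by step. You differentiate $\HSS^2=\tfrac12\Tr[X^2]$ to get $\Tr[X\mathcal L_t(X)]$, drop the Hamiltonian term by cyclicity, and apply Lemma~\ref{lem:dissipator-identity} to each dissipator. You then remove the $X^2$ term via $\mathcal L_t(\id)=\sum_\alpha\gamma_\alpha(t)(L_{\alpha,t}L_{\alpha,t}^\dagger-L_{\alpha,t}^\dagger L_{\alpha,t})=0$ and integrate over $[s,t]$. Your explicit regularity assumption on $t\mapsto\mathcal L_t$, which justifies the fundamental theorem of calculus, is a sensible clarification that the paper leaves implicit.
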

	
	\begin{proof}
		By definition,
		\begin{equation}
			\HSS^2(t)
			=
			\frac12\Tr[X(t)^2],
			\label{eq:HSS-square-definition}
		\end{equation}
		because $X(t)$ is Hermitian. Differentiating
		\eqref{eq:HSS-square-definition} gives
		\begin{align}
			\frac{d}{dt}\HSS^2(t)
			&=
			\frac12\frac{d}{dt}\Tr[X(t)^2]
			\nonumber\\
			&=
			\frac12\Tr[\dot X(t)X(t)+X(t)\dot X(t)]
			\nonumber\\
			&=
			\Tr[X(t)\dot X(t)].
			\label{eq:HSS-square-derivative}
		\end{align}
		In the last equality we used cyclicity of the trace.
		
		Since the parameter $\varphi$ is encoded only in the initial state, the
		generator is independent of $\varphi$. Therefore
		\[
		\dot X(t)=\mathcal L_t(X(t)),
		\]
		and hence
		\begin{equation}
			\frac{d}{dt}\HSS^2(t)
			=
			\Tr[X(t)\mathcal L_t(X(t))].
			\label{eq:HSS-generator-start}
		\end{equation}
		
		We now insert the GKSL form \eqref{eq:GKSL}. The Hamiltonian part gives
		\begin{align}
			\Tr\!\left[
			X(t)\bigl(-i[H_t,X(t)]\bigr)
			\right]
			&=
			-i\Tr\!\left[
			X(t)H_tX(t)-X(t)^2H_t
			\right]
			\nonumber\\
			&=
			-i\Tr\!\left[
			H_tX(t)^2-X(t)^2H_t
			\right]
			\nonumber\\
			&=0,
			\label{eq:hamiltonian-vanishes}
		\end{align}
		where cyclicity of the trace was used in the second line.
		
		For each dissipative term, Lemma~\ref{lem:dissipator-identity} gives
		\begin{align}
			\Tr\!\left[
			X(t)\mathcal D_{L_{\alpha,t}}(X(t))
			\right]
			&=
			-\frac12
			\|[L_{\alpha,t},X(t)]\|_{\HS}^2
			\nonumber\\
			&\quad+
			\frac12
			\Tr\!\left[
			X(t)^2
			\left(
			L_{\alpha,t}L_{\alpha,t}^\dagger
			-
			L_{\alpha,t}^\dagger L_{\alpha,t}
			\right)
			\right].
			\label{eq:single-dissipator-contribution}
		\end{align}
		Summing over $\alpha$ with weights $\gamma_\alpha(t)$, and using
		\eqref{eq:HSS-generator-start}, we obtain
		\begin{align}
			\frac{d}{dt}\HSS^2(t)
			&=
			-\frac12
			\sum_\alpha
			\gamma_\alpha(t)
			\|[L_{\alpha,t},X(t)]\|_{\HS}^2
			\nonumber\\
			&\quad+
			\frac12
			\Tr\!\left[
			X(t)^2
			\sum_\alpha
			\gamma_\alpha(t)
			\left(
			L_{\alpha,t}L_{\alpha,t}^\dagger
			-
			L_{\alpha,t}^\dagger L_{\alpha,t}
			\right)
			\right].
			\label{eq:before-unitality}
		\end{align}
		The second term vanishes by the unitality condition
		\eqref{eq:generator-unitality}. Therefore
		\begin{equation}
			\frac{d}{dt}\HSS^2(t)
			=
			-\frac12
			\sum_\alpha
			\gamma_\alpha(t)
			\|[L_{\alpha,t},X(t)]\|_{\HS}^2.
		\end{equation}
		Since $\gamma_\alpha(t)\ge 0$ and
		$\|[L_{\alpha,t},X(t)]\|_{\HS}^2\ge 0$, every term in the sum is
		nonnegative before the minus sign. Hence
		\[
		\frac{d}{dt}\HSS^2(t)\le 0.
		\]
		This proves the boxed formula \eqref{eq:HSS-generator}.
		
		It remains only to spell out the contractivity consequence. For any
		$t\ge s$, integrating \eqref{eq:HSS-generator} from $s$ to $t$ yields
		\begin{align}
			\HSS^2(t)-\HSS^2(s)
			&=
			-\frac12
			\int_s^t
			\sum_\alpha
			\gamma_\alpha(\tau)
			\|[L_{\alpha,\tau},X(\tau)]\|_{\HS}^2
			d\tau
			\le 0.
		\end{align}
		Thus $\HSS^2(t)\le \HSS^2(s)$. Since $\HSS(t)\ge 0$ for all $t$, taking
		the square root preserves the inequality:
		\[
		\HSS(t)\le \HSS(s).
		\]
		This is precisely Hilbert--Schmidt-speed contractivity along the tangent
		flow.
	\end{proof}
	
	\begin{remark}
		Theorem~\ref{thm:generator-contraction} does not require the Lindblad
		operators $L_{\alpha,t}$ to be Hermitian, nor does it require each
		individual dissipator to be unital. The required condition is only the
		unitality of the full generator, $\mathcal L_t(\id)=0$.
	\end{remark}
	
	\begin{remark}
		If the generator depends explicitly on the parameter $\varphi$, then
		\[
		\dot X(t)
		=
		\mathcal L_{t,\varphi}(X(t))
		+
		(\partial_\varphi\mathcal L_{t,\varphi})(\rho_\varphi(t)).
		\]
		In this case an additional source term appears. Consequently, the
		derivative of $\HSS^2(t)$ contains an extra contribution with no
		definite sign, and the monotonicity formula \eqref{eq:HSS-generator}
		need not hold.
	\end{remark}
	
	\section{A non-unital CP-divisible qutrit counterexample}
	\label{sec:qutrit}
	
	The qubit result does not extend to dimensions $d\geq3$.  We now give
	an explicit qutrit semigroup for which the Hilbert--Schmidt speed grows,
	even though the dynamics is CP-divisible.
	
	Let $\{\ket{1},\ket{2},\ket{3}\}$ be an orthonormal qutrit basis and
	consider the GKSL generator
	\begin{align}
		\mathcal L(\rho)
		&=
		\gamma
		\left(
		L\rho L^\dagger
		-
		\frac12\{L^\dagger L,\rho\}
		\right),
		\nonumber\\
		&\qquad
		L=\ket{1}\!\bra{2},
		\qquad
		\gamma>0.
		\label{eq:qutrit-generator}
	\end{align}
	This generator describes irreversible population transfer from level
	$2$ to level $1$.  Because it is a time-independent GKSL generator
	with positive rate, the semigroup
	\begin{equation}
		\Lambda_t=e^{t\mathcal L}
	\end{equation}
	is CPTP and CP-divisible.
	
	The generator is not unital:
	\begin{align}
		\mathcal L(\id)
		&=
		\gamma(LL^\dagger-L^\dagger L)
		\\
		&=
		\gamma
		\left(
		\ket{1}\!\bra{1}
		-
		\ket{2}\!\bra{2}
		\right)
		\neq0.
	\end{align}
	
	Consider the local state family
	\begin{equation}
		\rho_\varphi(0)
		=
		\frac{\id_3}{3}
		+
		\varphi X(0),
		\qquad
		X(0)
		=
		\operatorname{diag}(2,1,-3).
		\label{eq:qutrit-family}
	\end{equation}
	This is a valid density operator for
	\begin{equation}
		-\frac16\leq\varphi\leq\frac19,
	\end{equation}
	and its tangent is the traceless Hermitian operator $X(0)$.
	
	Since the channel is independent of $\varphi$,
	\begin{equation}
		X(t)=\Lambda_t(X(0)).
	\end{equation}
	For a diagonal tangent
	\[
	X(t)=\operatorname{diag}(x_1(t),x_2(t),x_3(t)),
	\]
	the master equation gives
	\begin{equation}
		\dot x_1(t)=\gamma x_2(t),
		\qquad
		\dot x_2(t)=-\gamma x_2(t),
		\qquad
		\dot x_3(t)=0.
	\end{equation}
	With
	\[
	(x_1(0),x_2(0),x_3(0))=(2,1,-3),
	\]
	the solution is
	\begin{equation}
		x_1(t)=3-e^{-\gamma t},
		\qquad
		x_2(t)=e^{-\gamma t},
		\qquad
		x_3(t)=-3.
	\end{equation}
	Therefore,
	\begin{equation}
		X(t)
		=
		\operatorname{diag}
		\left(
		3-e^{-\gamma t},
		e^{-\gamma t},
		-3
		\right).
	\end{equation}
	
	Writing $u=e^{-\gamma t}$, the squared Hilbert--Schmidt speed is
	\begin{align}
		\HSS^2(t)
		&=
		\frac12\Tr[X(t)^2]
		\\
		&=
		\frac12\left[(3-u)^2+u^2+9\right]
		\\
		&=
		9-3u+u^2.
		\label{eq:qutrit-HSS}
	\end{align}
	Its derivative is
	\begin{align}
		\frac{d}{dt}\HSS^2(t)
		&=
		\gamma u(3-2u).
		\label{eq:qutrit-growth}
	\end{align}
	For $t\geq0$, one has $0<u\leq1$, and hence
	\begin{equation}
		\frac{d}{dt}\HSS^2(t)>0.
	\end{equation}
	Thus the Hilbert--Schmidt speed grows strictly at every finite time,
	although the evolution is a CPTP, CP-divisible semigroup.  This is
	consistent with the general observation that noncontractivity of
	the Hilbert--Schmidt norm/distance is typical beyond the qubit case
	\cite{WangSchirmer2009}.
	
	\begin{remark}
		The increase in \eqref{eq:qutrit-growth} is caused by non-unitality, not
		by a failure of CP-divisibility.  Consequently, in dimensions
		$d\geq3$, Hilbert--Schmidt-speed growth does not by itself witness
		non-P-divisibility or non-CP-divisibility.
	\end{remark}
	
	\begin{remark}
		The counterexample embeds directly into every dimension $d>3$.  One
		may retain the jump operator $L=\ket{1}\!\bra{2}$, let the generator act
		trivially on the remaining levels, and choose
		\begin{equation}
			X(0)=\operatorname{diag}(2,1,-3,0,\ldots,0).
		\end{equation}
		The local family
		\[
		\rho_\varphi(0)=\frac{\id_d}{d}+\varphi X(0)
		\]
		is physical for sufficiently small $|\varphi|$, and the same
		Hilbert--Schmidt-speed growth follows.
	\end{remark}
	
	\section{Consequences for divisibility witnesses}
	\label{sec:consequences}
	
	The preceding results lead to a dimension- and symmetry-dependent
	interpretation of the growth of the Hilbert--Schmidt speed.
	
	\subsection{Unital evolutions in arbitrary finite dimension}
	\label{subsec:unital-evolutions}
	
	Let us first consider the case in which all dynamical maps are unital. If
	there exist $t>s\geq0$ and an encoded family of states such that
	\begin{equation}
		\HSS(t)>\HSS(s),
	\end{equation}
	then the evolution cannot be P-divisible between the times $s$ and $t$.
	Indeed, if the evolution were P-divisible, then, together with the
	unitality of the endpoint maps, every admissible positive trace-preserving
	propagator would be unital and therefore contractive in the
	Hilbert--Schmidt norm.
	
	Therefore, for unital finite-dimensional dynamics, an increase in the
	Hilbert--Schmidt speed provides a valid witness of the failure of
	P-divisibility. Since CP-divisibility implies P-divisibility, this growth
	also rules out CP-divisibility.
	
	\subsection{Qubit evolutions}
	\label{subsec:qubit-evolutions}
	
	For qubit evolutions, the unitality condition is not required. Every
	positive trace-preserving propagator contracts traceless Hermitian
	tangent operators. Hence, the condition
	\begin{equation}
		\HSS(t)>\HSS(s)
	\end{equation}
	rules out P-divisibility for arbitrary qubit dynamics.
	
	This result is a consequence of a dimension-specific norm identity and
	therefore should not be extended to higher-dimensional systems.
	
	\subsection{Non-unital dynamics in dimensions \(d\geq3\)}
	\label{subsec:nonunital-highd}
	
	In dimensions $d\geq3$, non-unitality may lead to an increase in the
	Hilbert--Schmidt speed even for CP-divisible semigroups. Accordingly,
	such a growth can have two possible origins:
	\begin{enumerate}
		\item failure of P-divisibility; or
		\item failure of Hilbert--Schmidt contractivity on traceless Hermitian
		tangents caused by non-unital dynamics.
	\end{enumerate}
	Thus, independent information about the unitality of the dynamics is
	required before the growth can be interpreted as a witness of
	divisibility breaking.
	
	\section{Logical hierarchy of the results}
	\label{sec:logical-hierarchy}
	
	For clarity, the main implications obtained in this work can be written
	as follows.
	
	For a single positive trace-preserving map, one has
	\begin{equation}
		\begin{aligned}
			\Phi\text{ unital}
			\quad &\Longleftrightarrow\quad
			\|\Phi(X)\|_{\HS}\leq\|X\|_{\HS}
			\\[-2pt]
			&\hspace{1em}
			\text{for every Hermitian }X,
			\\[3pt]
			\Phi\text{ unital}
			\quad &\Longrightarrow\quad
			\|\Phi(X)\|_{\HS}\leq\|X\|_{\HS}
			\\[-2pt]
			&\hspace{1em}
			\text{for every physical tangent }X.
		\end{aligned}
	\end{equation}
	
	The second implication is not, in general, an equivalence, because
	physical tangents are necessarily traceless.
	
	For a parameter-independent dynamical evolution, unitality of the
	dynamical maps at all times implies
	\begin{equation}
		\Lambda_t\text{ unital for every }t
		\quad\Longrightarrow\quad
		\HSS(t)\leq\HSS(0).
	\end{equation}
	If, in addition, the evolution is P-divisible, then the stronger
	monotonicity relation
	\begin{equation}
		\HSS(t)\leq\HSS(s)
		\qquad
		(t\geq s\geq0)
	\end{equation}
	is obtained.
	
	For qubit systems, one finds that
	\begin{equation}
		\text{P-divisibility}
		\quad\Longrightarrow\quad
		\HSS(t)\leq\HSS(s),
	\end{equation}
	without imposing the unitality condition.
	
	For $d\geq3$, however,
	\begin{equation}
		\text{CP-divisibility}
		\centernot\Longrightarrow
		\HSS(t)\leq\HSS(s)
	\end{equation}
	when non-unital dynamics is allowed.
	
\section{Conclusions}
\label{sec:conclusions}

In this work, we studied the Hilbert--Schmidt speed as a witness of
non-Markovianity in finite-dimensional open quantum systems. We
identified the conditions under which an increase of the HSS provides a
reliable signature of the breakdown of divisibility. In particular, we
showed that the validity of this witness is closely related to the
unitality of the evolution and the dimension of the system.

For unital dynamics, P-divisibility guarantees that the HSS decreases
monotonically in time. Therefore, any increase of the HSS directly
signals the breakdown of P-divisibility, and consequently also excludes
CP-divisibility. Thus, for unital evolutions in arbitrary finite
dimension, the HSS-based witness provides a clear signature of
non-Markovian behavior.

We also showed that the qubit case is more general. For
two-dimensional systems, the HSS decreases under every P-divisible
evolution, without requiring unitality. Hence, for arbitrary qubit
dynamics, an increase of the HSS is a valid witness of the failure of
P-divisibility. This confirms the robustness of the HSS-based witness
for qubit open systems.

In higher dimensions, non-unitality requires additional attention. We
constructed a non-unital CP-divisible qutrit semigroup for which the HSS
increases. This result does not weaken the HSS-based witness, but
clarifies its proper interpretation beyond the qubit case. In
particular, for non-unital systems of dimension three and higher, an
increase of the HSS should be considered together with the unitality
properties of the evolution.

Finally, for unital GKSL dynamics, we derived a dissipation relation
that directly explains the monotonic decrease of the HSS. This result
connects the HSS-based approach with the generator description of open
quantum dynamics and supports its use as a witness of the breakdown of
Markovian evolution.

Overall, our results establish the HSS as a strong and reliable witness
of non-Markovianity for unital dynamics in arbitrary finite dimension
and for general qubit dynamics. At the same time, they determine how
the witness should be interpreted for higher-dimensional non-unital
systems, thereby providing a precise framework for its application.

	\section*{Data availability}
	
	No data were generated or analyzed in this theoretical study.
	
	\section*{Conflict of interest}
	
	The author declares no conflict of interest.

\appendix

\section{Positive maps and Kadison's inequality}
\label{app:KS}

We recall a standard proof of the self-adjoint form of Kadison's
inequality \cite{Kadison1952}; see also \cite{Chruscinski2019}. Let
$\Phi:\mathcal B(\mathcal H)\to\mathcal B(\mathcal H)$ be a positive,
unital, complex-linear map. First, positivity implies Hermiticity
preservation. Indeed, every Hermitian operator $A$ can be written as
$A=A_+-A_-$ with $A_\pm\ge0$, and hence
$\Phi(A)=\Phi(A_+)-\Phi(A_-)$ is Hermitian. Writing an arbitrary
operator as $A=B+iC$, with $B$ and $C$ Hermitian, and using complex
linearity, we obtain
\[
\Phi(A^\dagger)
=
\Phi(B-iC)
=
\Phi(B)-i\Phi(C)
=
\Phi(A)^\dagger .
\]

Now let $X=X^\dagger$. Consider the commutative $C^*$-algebra
$\mathcal A$ generated by $X$ and $\id$. The restriction of $\Phi$ to
$\mathcal A$ is again positive and unital. Since every positive map
from a commutative $C^*$-algebra into $\mathcal B(\mathcal H)$ is
completely positive, the restriction of $\Phi$ to $\mathcal A$ is
completely positive. Therefore, the Schwarz inequality holds on
$\mathcal A$:
\[
\Phi(C^\dagger C)\geq \Phi(C^\dagger)\Phi(C),
\qquad C\in\mathcal A .
\]
Taking $C=X$ gives
\[
\Phi(X^2)\geq \Phi(X)^2 .
\]
This is Kadison's inequality in the self-adjoint form used in the main
text. For arbitrary, not necessarily normal, operators the Schwarz
inequality is guaranteed by unital $2$-positivity, and hence in
particular by unital complete positivity.


\begin{thebibliography}{99}
		
		
		
		
		\bibitem{BreuerPetruccione2002}
		H.-P.~Breuer and F.~Petruccione,
		The theory of open quantum systems,
		OUP Oxford, 2002.
		
		
		
		
		\bibitem{NielsenChuang2010}
		M.~A.~Nielsen and I.~L.~Chuang,
		Quantum computation and quantum information: 10th anniversary edition,
		Cambridge University Press, Cambridge, 2010.
		
		\bibitem{RHP}
		A.~Rivas, S.~F.~Huelga, and M.~B.~Plenio,
		Entanglement and non-Markovianity of quantum evolutions,
		Phys.\ Rev.\ Lett.\ \textbf{105}, 050403 (2010).
		
		\bibitem{BreuerReview}
		H.-P.~Breuer, E.-M.~Laine, J.~Piilo, and B.~Vacchini,
		Colloquium: Non-Markovian dynamics in open quantum systems,
		Rev.\ Mod.\ Phys.\ \textbf{88}, 021002 (2016).
		
		\bibitem{RivasReview}
		A.~Rivas, S.~F.~Huelga, and M.~B.~Plenio,
		Quantum non-Markovianity: Characterization, quantification and
		detection,
		Rep.\ Prog.\ Phys.\ \textbf{77}, 094001 (2014).
		
		\bibitem{WangSchirmer2009}
		X.~Wang and S.~G.~Schirmer,
		Contractivity of the Hilbert--Schmidt distance under open-system dynamics,
		Phys.\ Rev.\ A \textbf{79}, 052326 (2009).
		
		
		
		\bibitem{RanganiJahromi2020}
		H.~Rangani~Jahromi, K.~Mahdavipour, M.~Khazaei~Shadfar, and R.~Lo~Franco,
		Witnessing non-Markovian effects of quantum processes through Hilbert-Schmidt speed,
		Phys.\ Rev.\ A \textbf{102}, 022221 (2020).
		
		\bibitem{BreuerLainePiilo2009}
		H.-P.~Breuer, E.-M.~Laine, and J.~Piilo,
		Measure for the degree of non-Markovian behavior of quantum processes in open systems,
		Phys.\ Rev.\ Lett.\ \textbf{103}, 210401 (2009).
		
		\bibitem{Mahdavipour2022}
		K.~Mahdavipour, M.~Khazaei~Shadfar, H.~Rangani~Jahromi, R.~Morandotti, and R.~Lo~Franco,
		Memory Effects in High-Dimensional Systems Faithfully Identified by Hilbert--Schmidt Speed-Based Witness,
		Entropy \textbf{24}, 395 (2022).
		
		\bibitem{JahromiLoFranco2022}
		H.~Rangani~Jahromi and R.~Lo~Franco,
		Searching for exceptional points and inspecting non-contractivity of
		trace distance in (anti-)$\mathcal{PT}$-symmetric systems,
		Quantum Inf.\ Process.\ \textbf{21}, 155 (2022).
		
		
		
		
			\bibitem{Kadison1952} R.~V.~Kadison, Ann.\ Math.\ \textbf{56}, 494 (1952).
		
		
		
		
		
		\bibitem{GKSL2}
		V.~Gorini, A.~Kossakowski, and E.~C.~G.~Sudarshan,
		Completely positive dynamical semigroups of $N$-level systems,
		J.\ Math.\ Phys.\ \textbf{17}, 821--825 (1976).
		
		\bibitem{GKSL1}
		G.~Lindblad,
		On the generators of quantum dynamical semigroups,
		Commun.\ Math.\ Phys.\ \textbf{48}, 119--130 (1976).
		
		
		
		
		
		
		
		\bibitem{Chruscinski2019} D.~Chru\'sci\'nski and F.~Mukhamedov, Phys.\ Rev.\ A \textbf{100}, 052120 (2019).
		
	\end{thebibliography}
\end{document}